\titleformat*{\section}{\bf\large\center\uppercase} 
\newcommand{\GG}[1]{}
\theoremstyle{definition}
\newtheorem{prop}{Proposition}
\newtheorem{lemma}{Lemma}
\newtheorem{example}{Example}
\newtheorem*{corollary*}{Corollary}
\apptocmd{\sloppy}{\hbadness 10000\relax}{}{} 
\begin{document}
\doublespacing
\title{\bf On randomization-based causal inference for matched-pair factorial designs}
\author{Jiannan Lu\footnote{Address for correspondence: Jiannan Lu, One Microsoft Way, Redmond, Washington 98052-6399, U.S.A.
Email: \texttt{jiannl@microsoft.com}}~
and Alex Deng~\\Analysis and Experimentation, Microsoft Corporation}
\date{\today}
\maketitle
\begin{abstract}
Under the potential outcomes framework, we introduce matched-pair factorial designs, and propose the matched-pair estimator of the factorial effects. We also calculate the randomization-based covariance matrix of the matched-pair estimator, and provide the ``Neymanian'' estimator of the covariance matrix.
\end{abstract}
\textbf{Keywords:} Experimental design; factorial effect; precision; potential outcome.

\section{Introduction}\label{sec:intro}

Randomization is widely regarded as the gold standard of causal inference \citep{Rubin:2008}. Under the potential outcomes framework \citep{Neyman:1923, Rubin:1974}, for a two-level factor, we define the causal effect as the linear contrast of the potential outcomes under treatment and control. To investigate multiple factors simultaneously, $2^K$ factorial designs \citep{Fisher:1935, Yates:1937} can be employed. Randomization-based casual inference for factorial designs has deep roots in the experimental design literature \cite[e.g.,][]{Kempthrone:1952}, and was recently presented using the language of potential outcomes \citep{Dasgupta:2015, Mukerjee:2016}.

Pair-matching \citep{Cochran:1953}, as a special form of stratification, has been widely adopted by researchers and practitioners \cite[e.g.,][]{Grossarth:2008}. For treatment-control studies (i.e., $2^1$ factorial designs), pair-matching has been extensively investigated by the causal inference community \citep{Rosenbaum:2002, Imai:2008, Imai:2009, Ding:2016, Fogarty:2016a, Fogarty:2016b}. Unfortunately, similar discussion appears to be missing for general factorial designs. In this paper, we fill this theoretical gap by extending \cite{Imai:2008}'s analysis to matched-pair factorial designs. We restrict the experimental units to be a fixed finite population, for a two-fold reason. First, as shown in \cite{Imai:2008}, it is straightforward to generalize the finite-population analyses to infinite populations. Second, for some practical examples, it might be unreasonable to view the experimental units as a random sample from an infinite population.

The paper proceeds as follows. Section \ref{sec:2k-c} reviews the randomization-based causal inference framework for completely randomized factorial designs. Section \ref{sec:2k-m} introduces matched-pair factorial designs, proposes the matched-pair estimator for the factorial effects, calculates its covariance matrix and the corresponding estimator. Section  \ref{sec:conclusion} briefly discusses the precision gains by pair-matching in factorial designs, and concludes.

\section{Causal inference for completely randomized factorial designs}\label{sec:2k-c}

To ensure self-containment, we first review the randomization-based causal inference framework for completely randomized factorial designs. Although most materials are adapted from \cite{Dasgupta:2015} and \cite{Lu:2016b, Lu:2016a}, some are refined for better clarity. For more detailed discussions on factorial designs, see, e.g., \cite{Wu:2009}.

\subsection{Factorial designs}

A $2^K$ factorial design consists of $K$ two-level (coded $-1$ and $+1$) factors. We represent it by the corresponding model matrix \citep{Wu:2009}, a $2^K \times 2^K$ matrix $\bm H_K = (\bm h_0, \ldots, \bm h_{2^K-1})$ that can  be constructed as follows: 
\begin{enumerate}

\item Let $\bm h_0 = \bm 1_{2^K};$ 

\item For $k=1,\ldots,K$, construct $\bm h_k$ by letting its first $2^{K-k}$ entries be $-1,$ the next $2^{K-k}$ entries be $+1,$ and repeating $2^{k-1}$ times;

\item If $K \ge 2,$ order all subsets of $\{1, \ldots, K\}$ with at least two elements, first by cardinality and then lexicography. For $k = 1, \ldots 2^K-K-1,$ let $\sigma_k$ be the $k$th subset and $\bm h_{K+k} = \prod_{l \in \sigma_k} \bm h_l,$ where ``$\prod$'' stands for entry-wise product.

\end{enumerate}

The use of the constructed $\bm H_K$ is two-fold:

\begin{enumerate}

\item $\bm h_0$ corresponds to the null effect; $\bm h_1$ to $\bm h_K$ correspond to the main effects of the $K$ factors; $\bm h_{K+1}$ to $\bm h_{K+\binom{K}{2}}$ correspond to the two-way interactions; $\ldots;$ $\bm h_{2^K-1}$ corresponds to the $K$-way interaction;

\item The $j$th row of $(\bm h_1, \ldots, \bm h_K)$ corresponds to the $j$th treatment combination $\bm z_j.$

\end{enumerate}

For $j=1, \dots, 2^K,$ let $\bm \lambda_j$ denote the $j$th row of $\bm H_K.$ 

\begin{example}
\label{example:1}
For $2^2$ factorial designs, the model matrix is:
\begin{equation*}
\bm H_2 =
\bordermatrix{& \bm h_0 & \bm h_1& \bm h_2 & \bm h_3\cr
             \bm \lambda_1 & +1  & -1 &  -1  & +1 \cr
             \bm \lambda_2 & +1 & -1 &  +1  & -1 \cr
             \bm \lambda_3 & +1  & +1 &  -1  & -1 \cr
             \bm \lambda_4 & +1 & +1 & +1 & +1}.
\end{equation*}
The four treatment combinations are $\bm z_1=(-1, -1),$ $\bm z_2=(-1, +1),$ $\bm z_3=(+1, -1)$ and $\bm z_4=(+1, +1).$ We represent the main effects of factors 1 and 2 by $\bm h_1 = (-1,-1,+1,+1)^\prime$ and $\bm h_2=(-1,+1,-1,+1)^\prime$ respectively, and the two-way interaction by $\bm h_3=(+1,-1,-1,+1)^\prime.$ 
\end{example}

\subsection{Randomization-based causal inference}

We consider a $2^K$ factorial design with $N = 2^K r$ units. By invoking the Stable Unit Treatment Value Assumption \citep{Rubin:1980}, for $i = 1, \ldots, N$ and $l=1, \ldots, 2^K,$ let the potential outcome of unit $i$ under $\bm z_l$ be $Y_i(\bm z_l),$ the average potential outcome for $\bm z_l$ be 
$
\bar Y(\bm z_l) = N^{-1} \sum_{i=1}^N Y_i(\bm z_l),
$
and
$
\bm Y_i = \{ Y_i(\bm z_1), \ldots, Y_i(\bm z_{2^K}) \}^{\prime}.
$
Define the individual and population-level factorial effect vectors as
\begin{equation}
\label{eq:estimand}
\bm \tau_i = \frac{1}{2^{K-1}} \bm H_K^\prime {\bm Y}_i
\quad
(i=1, \ldots, N);
\quad
\bm \tau = \frac{1}{N} \sum_{i=1}^N \bm \tau_i,
\end{equation}
respectively. Our interest lies in $\bm \tau.$

We denote the treatment assignment mechanism by 
\begin{equation*}
W_i(\bm z_l)
= 
\begin{cases}
1, & \text{if unit $i$ is assigned treatment $\bm z_l,$ } \\
0, & \text{otherwise.} \\
\end{cases}
\quad
(i=1, \ldots, N; l = 1, \ldots, 2^K).
\end{equation*}
We impose the following restrictions on the treatment assignment mechanism:
\begin{equation*}
\sum_{l=1}^{2^K} W_i(\bm z_l) = 1
\quad
(i = 1, \ldots, N);
\quad
\sum_{i=1}^N W_i(\bm z_l) = r
\quad
(l = 1, \ldots, 2^K).
\end{equation*}
In other words, we assign $r$ units to each treatment, and one treatment to each unit. Therefore, the observed outcome of unit $i$ is
$
Y_i^\textrm{obs} = \sum_{l=1}^{2^K} W_i(\bm z_l) Y_i(\bm z_l),
$
and the average observed outcome for treatment $\bm z_l$ is
$
\bar Y^\textrm{obs}(\bm z_l) 
=
r^{-1} \sum_{i=1}^N W_i(\bm z_l) Y_i(\bm z_l).
$
Under complete randomization, \cite{Dasgupta:2015} estimated $\bm \tau$ by
$$
\hat {\bm \tau}_{\textrm{C}} =  2^{-(K-1)} \bm H_K^\prime \bar{\bm Y}^{\mathrm{obs}},
\quad
\bar{\bm Y}^{\mathrm{obs}} = \{ \bar Y^\textrm{obs}(\bm z_1) , \ldots, \bar Y^\textrm{obs}(\bm z_{2^K}) \}^\prime.
$$
The sole source of randomness of 
$
\hat {\bm \tau}_{\textrm{C}}
$
is the treatment assignment. \cite{Dasgupta:2015} and \cite{Lu:2016a} derived the covariance matrix of this estimator, and the ``Neymanian'' estimator of the covariance matrix. We summarize their main results in the following lemmas.

\begin{lemma}
\label{lemma:cov-c}
$\hat {\bm \tau}_\textrm{C}$ is unbiased, and its covariance matrix is
\begin{equation}
\label{eq:cov-c}
\mathrm{Cov} ( \hat{\bm \tau}_\textrm{C} ) 
= \frac{1}{2^{2(K-1)}r} 
\sum_{l=1}^{2^K} \bm \lambda_l^\prime \bm \lambda_l \underbrace{\frac{1}{N-1} \sum_{i=1}^N \{ Y_i(\bm z_l) - \bar Y(\bm z_l) \}^2}_{S^2(\bm z_l)} 
- \frac{1}{N(N-1)} \sum_{i=1}^N (\bm \tau_i - \bm \tau)(\bm \tau_i - \bm \tau)^\prime.
\end{equation}
Moreover, the ``Neymanian'' estimator of the covariance matirx is
\begin{equation*}
\widehat \mathrm{Cov} ( \hat{\bm \tau}_\textrm{C} ) = \frac{1}{2^{2(K-1)}r} \sum_{l=1}^{2^K} \bm \lambda_l^\prime \bm \lambda_l \underbrace{\frac{1}{r-1}\sum_{i=1}^N W_i(\bm z_l) \{ Y_i^\textrm{obs} - \bar Y^\textrm{obs}(\bm z_l) \}^2}_{s^2(\bm z_l)},
\end{equation*}
whose bias is
$
\sum_{i=1}^N (\bm \tau_i - \bm \tau)(\bm \tau_i - \bm \tau)^\prime / (N^2-N).
$
\end{lemma}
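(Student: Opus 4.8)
The plan is to view $\hat{\bm\tau}_\textrm{C} = 2^{-(K-1)}\bm H_K^\prime\bar{\bm Y}^{\mathrm{obs}}$ as a fixed linear transformation of the vector of treatment-combination sample means, and to reduce everything to the classical Neymanian moment formulas for simple random sampling without replacement.

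First I would check unbiasedness: under complete randomization the $r$ units receiving $\bm z_l$ form a simple random sample of size $r$ from the $N$ units, so $E\{\bar Y^{\mathrm{obs}}(\bm z_l)\} = \bar Y(\bm z_l)$ for every $l$; stacking these into $\bar{\bm Y} = N^{-1}\sum_{i=1}^N\bm Y_i$ and applying the linear map gives $E(\hat{\bm\tau}_\textrm{C}) = 2^{-(K-1)}\bm H_K^\prime\bar{\bm Y} = \bm\tau$.

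For the covariance I would first compute $\mathrm{Cov}(\bar{\bm Y}^{\mathrm{obs}})$ entrywise from the first two moments of the assignment indicators, namely $E\{W_i(\bm z_l)\} = r/N$, $E\{W_i(\bm z_l)W_j(\bm z_l)\} = r(r-1)/\{N(N-1)\}$ for $i\ne j$, $E\{W_i(\bm z_l)W_i(\bm z_{l'})\} = 0$ for $l\ne l'$, and $E\{W_i(\bm z_l)W_j(\bm z_{l'})\} = r^2/\{N(N-1)\}$ for $i\ne j$ and $l\ne l'$. After the routine algebra the diagonal entries come out as $(r^{-1} - N^{-1})S^2(\bm z_l)$ and the off-diagonal entries as $-N^{-1}S(\bm z_l,\bm z_{l'})$ with $S(\bm z_l,\bm z_{l'}) = (N-1)^{-1}\sum_{i=1}^N\{Y_i(\bm z_l) - \bar Y(\bm z_l)\}\{Y_i(\bm z_{l'}) - \bar Y(\bm z_{l'})\}$, which assembles into the single matrix identity $\mathrm{Cov}(\bar{\bm Y}^{\mathrm{obs}}) = r^{-1}\bm D - N^{-1}\bm S$, where $\bm D = \mathrm{diag}\{S^2(\bm z_1),\dots,S^2(\bm z_{2^K})\}$ and $\bm S = (N-1)^{-1}\sum_{i=1}^N(\bm Y_i - \bar{\bm Y})(\bm Y_i - \bar{\bm Y})^\prime$. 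Pre- and post-multiplying by $2^{-(K-1)}\bm H_K^\prime$ and $2^{-(K-1)}\bm H_K$ and using the two elementary identities $\bm H_K^\prime\bm D\bm H_K = \sum_{l=1}^{2^K}S^2(\bm z_l)\bm\lambda_l^\prime\bm\lambda_l$ (because the $l$th row of $\bm H_K$ is $\bm\lambda_l$) and $\bm H_K^\prime\bm S\bm H_K = 2^{2(K-1)}(N-1)^{-1}\sum_{i=1}^N(\bm\tau_i - \bm\tau)(\bm\tau_i - \bm\tau)^\prime$ (because $\bm H_K^\prime(\bm Y_i - \bar{\bm Y}) = 2^{K-1}(\bm\tau_i - \bm\tau)$ by the definition in \eqref{eq:estimand}) yields \eqref{eq:cov-c}.

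For the Neymanian estimator I would use the same simple-random-sampling observation: the units assigned $\bm z_l$ are a size-$r$ simple random sample, so the within-group sample variance is unbiased, $E\{s^2(\bm z_l)\} = S^2(\bm z_l)$. By linearity $E\{\widehat{\mathrm{Cov}}(\hat{\bm\tau}_\textrm{C})\} = 2^{-2(K-1)}r^{-1}\sum_{l=1}^{2^K}S^2(\bm z_l)\bm\lambda_l^\prime\bm\lambda_l$, which is precisely the first term of \eqref{eq:cov-c}; subtracting $\mathrm{Cov}(\hat{\bm\tau}_\textrm{C})$, the $S^2(\bm z_l)$ terms cancel and the bias is $\{N(N-1)\}^{-1}\sum_{i=1}^N(\bm\tau_i-\bm\tau)(\bm\tau_i-\bm\tau)^\prime$, as stated. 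I expect the only genuinely delicate step to be the moment computation for $\mathrm{Cov}(\bar{\bm Y}^{\mathrm{obs}})$ — in particular keeping the diagonal coefficient $r^{-1}-N^{-1}$ separate from the off-diagonal coefficient $-N^{-1}$ so that they collapse cleanly into $r^{-1}\bm D - N^{-1}\bm S$; everything downstream is bookkeeping with $\bm H_K$.
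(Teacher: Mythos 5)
Your proof is correct. Note that the paper itself states Lemma \ref{lemma:cov-c} without proof, simply summarizing results quoted from \cite{Dasgupta:2015} and \cite{Lu:2016a}; your derivation---viewing the units receiving each $\bm z_l$ as a simple random sample of size $r$, assembling $\mathrm{Cov}(\bar{\bm Y}^{\mathrm{obs}}) = r^{-1}\bm D - N^{-1}\bm S$ from the assignment-indicator moments, conjugating by $2^{-(K-1)}\bm H_K$, and using $E\{s^2(\bm z_l)\} = S^2(\bm z_l)$ for the Neymanian estimator---is exactly the standard Neymanian argument used in those references, and each step (the indicator moments, the identity $\bm H_K^\prime(\bm Y_i - \bar{\bm Y}) = 2^{K-1}(\bm\tau_i - \bm\tau)$, and the resulting bias $\{N(N-1)\}^{-1}\sum_{i=1}^N(\bm\tau_i-\bm\tau)(\bm\tau_i-\bm\tau)^\prime$) checks out.
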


The covariance matrix estimator 
$
\widehat \mathrm{Cov} ( \hat{\bm \tau}_\textrm{C} )
$
is ``conservative,'' because its diagonal entries, i.e., the variance estimators of the components of $\hat {\bm \tau}_\textrm{C},$ have non-negative biases.

\section{Causal inference for matched-pair randomized factorial designs}\label{sec:2k-m}

\subsection{Matched-pair designs and causal parameters}

As pointed out by \cite{Imai:2008}, they key idea behind matched-pair designs is that ``experimental units are paired based on their pre-treatment characteristics and the randomization of treatment is subsequently conducted within each matched pair.'' To apply this idea to factorial designs, we group the $N$ experimental units into $r$ ``pairs'' of $2^K$ units, and within each pair randomly assign one unit to each treatment. Let $\psi_j$ be the set of indices of the units in pair $j,$ such that
$$
|\psi_j| = 2^K
\quad
(j=1, \ldots, r);
\quad
\psi_j \cap \psi_{j^\prime} = \emptyset
\quad
(\forall j \neq j^\prime);
\quad
\cup_{j=1}^r \psi_j = \{1, \ldots, N\}.
$$
For pair $j,$ denote the average outcomes for treatment $\bm z_l$ as
$
\bar Y_{j\cdot}(\bm z_l) = 2^{-K} \sum_{i \in \psi_j} Y_i(\bm z_l),
$
and 
$
\bar{\bm Y}_{j\cdot} = \{ \bar Y_{j\cdot}(\bm z_1), \ldots, \bar Y_{j\cdot}(\bm z_{2^K}) \}^\prime,
$ 
and the factorial effect vector as
$
\bm \tau_{j\cdot} = 2^{-(K-1)} \bm H_K^\prime \bar{\bm Y}_{j\cdot}.
$
It is apparent
$$
\frac{1}{r} \sum_{j=1}^r \bar Y_{j\cdot}(\bm z_l) = \bar Y(\bm z_l)
\quad
(l = 1, \ldots, 2^k);
\quad
\frac{1}{r} \sum_{j=1}^r \bm \tau_{j\cdot} = \bm \tau.
$$

Within each pair, we randomly assign one unit to each treatment. Let the observed outcome of treatment $\bm z_l$ in pair $j$ be
$
Y_j^{\textrm{obs}}(\bm z_l) = \sum_{i \in \psi_j} Y_{i}(\bm z_l) W_i(\bm z_l),
$
and 
$
\bm Y_j^{\textrm{obs}} = \{ Y_j^{\textrm{obs}}(\bm z_1), \ldots, Y_j^{\textrm{obs}}(\bm z_{2^K}) \}^\prime.
$
We estimate $\bm \tau_{j\cdot}$ by 
$
\hat{\bm \tau}_{j\cdot} = 2^{-(K-1)} \bm H_K^\prime \bm Y_j^{\textrm{obs}}.
$
The matched-pair estimator for $\bm \tau$ is
\begin{equation}
\label{eq:est-m}
\hat {\bm \tau}_{\textrm{M}} = \frac{1}{r} \sum_{j=1}^r \hat{\bm \tau}_{j\cdot}.
\end{equation}

\subsection{Randomization-based inference}

We now present the main results of this paper.

\begin{prop}
\label{prop:cov-m}
$\hat {\bm \tau}_\textrm{M}$ is an unbiased estimator of $\bm \tau,$ and its covariance matrix is
\begin{equation}\label{eq:cov-m}
\mathrm{Cov} ( \hat{\bm \tau}_\textrm{M} ) 
= \frac{1}{2^{2(K-1)}r^2}
\sum_{l=1}^{2^K} \bm \lambda_l^\prime \bm \lambda_l \Delta_l 
- \frac{1}{2^K(2^K-1)r^2} \bm \Sigma,
\end{equation}
where
$$
\Delta_l = \frac{1}{2^K-1}
\left [
(N - 1) S^2(\bm z_l) - 2^K \sum_{j=1}^r 
\left\{ 
\bar Y_{j\cdot}(\bm z_l) - \bar Y(\bm z_l)
\right\}^2
\right ]
\quad
(l = 1, \ldots, 2^K),
$$
and
$$
\bm \Sigma
= \sum_{i=1}^N (\bm \tau_i - \bm \tau)(\bm \tau_i - \bm \tau)^\prime
- 2^K \sum_{j=1}^r (\bm \tau_{j\cdot} - \bm \tau)(\bm \tau_{j\cdot} - \bm \tau)^\prime.
$$
\end{prop}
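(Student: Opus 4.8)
The plan is to exploit the structural fact that, with the pairing $\psi_1,\ldots,\psi_r$ fixed, a matched-pair $2^K$ factorial design is simply a union of $r$ mutually independent miniature completely randomized $2^K$ factorial designs, the $j$th of which has exactly the $2^K$ units of $\psi_j$ and a single replicate per treatment combination. I would therefore apply Lemma \ref{lemma:cov-c} inside each pair, with $N$ there replaced by $2^K$ and $r$ replaced by $1$, and then stitch the $r$ pieces together. For unbiasedness this is immediate: Lemma \ref{lemma:cov-c} applied within pair $j$ gives that $\hat{\bm\tau}_{j\cdot} = 2^{-(K-1)}\bm H_K^\prime \bm Y_j^{\mathrm{obs}}$ is unbiased for $\bm\tau_{j\cdot}$, and averaging over $j$ together with $r^{-1}\sum_{j=1}^r \bm\tau_{j\cdot} = \bm\tau$ yields that $\hat{\bm\tau}_\mathrm{M}$ is unbiased for $\bm\tau$.

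For the covariance, I would first use the independence of the $r$ within-pair assignments to write $\mathrm{Cov}(\hat{\bm\tau}_\mathrm{M}) = r^{-2}\sum_{j=1}^r \mathrm{Cov}(\hat{\bm\tau}_{j\cdot})$, and then invoke Lemma \ref{lemma:cov-c} once more (with $N\mapsto 2^K$, $r\mapsto 1$) to obtain
\begin{equation*}
\mathrm{Cov}(\hat{\bm\tau}_{j\cdot}) = \frac{1}{2^{2(K-1)}}\sum_{l=1}^{2^K}\bm\lambda_l^\prime \bm\lambda_l\, S_j^2(\bm z_l) - \frac{1}{2^K(2^K-1)}\sum_{i\in\psi_j}(\bm\tau_i - \bm\tau_{j\cdot})(\bm\tau_i - \bm\tau_{j\cdot})^\prime,
\end{equation*}
where $S_j^2(\bm z_l) = (2^K-1)^{-1}\sum_{i\in\psi_j}\{Y_i(\bm z_l)-\bar Y_{j\cdot}(\bm z_l)\}^2$. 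Summing over $j$ then reduces the claim to two bookkeeping identities, which I would verify via the classical within/between/total sum-of-squares decomposition. The scalar version handles the first term: since $\bar Y_{j\cdot}(\bm z_l)$ is the average of $Y_i(\bm z_l)$ over $i\in\psi_j$, the cross term vanishes and
\begin{equation*}
(N-1)S^2(\bm z_l) = \sum_{j=1}^r\sum_{i\in\psi_j}\{Y_i(\bm z_l)-\bar Y_{j\cdot}(\bm z_l)\}^2 + 2^K\sum_{j=1}^r\{\bar Y_{j\cdot}(\bm z_l)-\bar Y(\bm z_l)\}^2,
\end{equation*}
so that $\sum_{j=1}^r S_j^2(\bm z_l) = (2^K-1)^{-1}[(N-1)S^2(\bm z_l) - 2^K\sum_{j=1}^r\{\bar Y_{j\cdot}(\bm z_l)-\bar Y(\bm z_l)\}^2] = \Delta_l$. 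The matrix version handles the second term: applying the same decomposition to the outer products $(\bm\tau_i-\bm\tau)(\bm\tau_i-\bm\tau)^\prime$, using $\bm\tau_{j\cdot}=2^{-K}\sum_{i\in\psi_j}\bm\tau_i$ so that the cross terms vanish, gives $\sum_{j=1}^r\sum_{i\in\psi_j}(\bm\tau_i-\bm\tau_{j\cdot})(\bm\tau_i-\bm\tau_{j\cdot})^\prime = \bm\Sigma$. Substituting both identities into $r^{-2}\sum_j \mathrm{Cov}(\hat{\bm\tau}_{j\cdot})$ then produces \eqref{eq:cov-m}.

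I do not expect a serious obstacle here; the only delicate points are (i) that Lemma \ref{lemma:cov-c} is being used in the degenerate case of one unit per treatment, where the covariance formula \eqref{eq:cov-c} is still valid even though the accompanying Neymanian variance estimator is not (it needs at least two replicates), and (ii) making the independence of the $r$ within-pair randomizations explicit, since that is exactly what licenses writing the covariance as a sum over pairs rather than carrying cross-pair covariance terms.
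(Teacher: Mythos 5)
Your proposal is correct and follows essentially the same route as the paper: apply Lemma \ref{lemma:cov-c} within each pair viewed as a completely randomized $2^K$ factorial design with one replicate per treatment, use independence of the within-pair assignments to sum the covariances, and finish with the within/between sum-of-squares decompositions for both the scalar and outer-product terms. Your side remarks (the degenerate one-replicate use of the lemma and the explicit role of independence) are consistent with, and slightly more careful than, the paper's own exposition.
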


\begin{proof}
To prove the first part, note that $\hat{\bm \tau}_{j\cdot}$ is an unbiased estimator of $\bm \tau_{j\cdot},$ for $j=1, \ldots, r.$ This fact combined with \eqref{eq:est-m} completes the proof. 

To prove the second part, let $\bm W_j = \{ W_i(\bm z_l) \}_{i\in \psi_j, l = 1, \ldots, 2^K}$ denote the treatment assignment for pair $j.$ By definition, $\bm W_j$'s are independently and identically distributed, implying the (joint) independence of 
$
\hat{\bm \tau}_{j\cdot}
$'s. 
Consequently, we can treat each pair as a completely randomized factorial design with $2^K$ units. Therefore by Lemma \ref{lemma:cov-c},
\begin{equation*}
\mathrm{Cov} ( \hat{\bm \tau}_{j\cdot} )
= 
\frac{1}{2^{2(K-1)}r^2}
\sum_{l=1}^{2^K} \bm \lambda_l^\prime \bm \lambda_l  \underbrace{\frac{1}{2^K-1} \sum_{i \in \psi_j} \{ Y_i(\bm z_l) - \bar Y_{j\cdot}(\bm z_l) \}^2}_{S_j^2(\bm z_l)} 
- \frac{1}{2^K(2^K-1)r^2} \sum_{i \in \psi_j } (\bm \tau_i - \bm \tau_{j\cdot})(\bm \tau_i - \bm \tau_{j\cdot})^\prime.
\end{equation*}
This implies that
\begin{eqnarray}
\label{eq:cov-m-proof}
\mathrm{Cov} ( \hat{\bm \tau}_\textrm{M} ) 
& = & \frac{1}{r^2} \sum_{j=1}^r \mathrm{Cov} ( \hat{\bm \tau}_{j\cdot} ) \nonumber \\
& = & 
\frac{1}{2^{2(K-1)}r^2}
\sum_{l=1}^{2^K} \bm \lambda_l^\prime \bm \lambda_l \sum_{j=1}^r S_j^2(\bm z_l) 
- \frac{1}{2^K(2^K-1)r^2} \sum_{j=1}^r \sum_{i \in \psi_j } (\bm \tau_i - \bm \tau_{j\cdot})(\bm \tau_i - \bm \tau_{j\cdot})^\prime.
\end{eqnarray}
To prove the equivalence between \eqref{eq:cov-m} and \eqref{eq:cov-m-proof}, simply note that
\begin{equation*}
(2^K - 1) \sum_{j=1}^r S_j^2(\bm z_l) 
+ 2^K \sum_{j=1}^r \{ \bar Y_{j\cdot}(\bm z_l) - \bar Y(\bm z_l)\}^2 
= (N-1) S^2(\bm z_l)
\end{equation*}
and
\begin{equation*}
\sum_{j=1}^r \sum_{i \in \psi_j } (\bm \tau_i - \bm \tau_{j\cdot})(\bm \tau_i - \bm \tau_{j\cdot})^\prime
+ 2^K \sum_{j=1}^r (\bm \tau_{j\cdot} - \bm \tau)(\bm \tau_{j\cdot} - \bm \tau)^\prime 
= \sum_{i=1}^N (\bm \tau_i - \bm \tau)(\bm \tau_i - \bm \tau)^\prime.
\end{equation*}
The proof is complete.
\end{proof}

We discuss a special case before moving forward. When $K=1,$ we have the classic treatment-control studies, and label the treatment and control as $+1$ and $-1,$ respectively. We are interested in the difference-in-mean estimator
\begin{equation*}
\hat {\tau}_{\textrm{MP}} = \frac{1}{r} \sum_{j=1}^r \{ Y_j^{\textrm{obs}}(+1) - Y_j^{\textrm{obs}}(-1) \}.
\end{equation*}
Denote $\psi_j = \{j_1, j_2\}.$ \cite{Imai:2008} (p. 4861, Eq. (8)) derived the variance of 
$
\hat {\tau}_{\textrm{MP}}
$
as
\begin{equation}
\label{eq:var-m-1}
\mathrm{Var} ( \hat {\tau}_{\textrm{MP}} ) = 
\frac{1}{4r^2} \sum_{j=1}^r 
\{ 
Y_{j_1}(+1) - Y_{j_2} (-1) - Y_{j_2} (+1) + Y_{j_1} (-1)
\}^2.
\end{equation}
As a validity check, Proposition \ref{prop:cov-m} reduces to \eqref{eq:var-m-1} when $K=1.$ We leave the proof to the readers.

We discuss the estimation of
$
\mathrm{Cov} ( \hat{\bm \tau}_\textrm{M} ), 
$
because Lemma \ref{lemma:cov-c} does not apply for matched-pair factorial designs. Inspired by \cite{Imai:2008}, we propose the following estimator:
\begin{equation}
\label{eq:cov-m-est}
\widehat{\mathrm{Cov}} ( \hat{\bm \tau}_\textrm{M} ) 
= \frac{1}{r(r-1)} 
\sum_{j=1}^r 
( \hat{\bm \tau}_{j\cdot} - \hat{\bm \tau}_\textrm{M} )( \hat{\bm \tau}_{j\cdot} - \hat{\bm \tau}_\textrm{M} )^\prime.
\end{equation}

\begin{prop}
\label{prop:cov-m-est-bias}
The bias of the covariance estimator in \eqref{eq:cov-m-est} is
\begin{equation*}
\mathrm{E} 
\left\{ 
\widehat{\mathrm{Cov}} ( \hat{\bm \tau}_\textrm{M} ) 
\right\} 
- \mathrm{Cov} ( \hat{\bm \tau}_\textrm{M} ) 
=
\frac{1}{r(r-1)}\sum_{j=1}^r (\bm \tau_{j\cdot} - \bm \tau) (\bm \tau_{j\cdot} - \bm \tau)^\prime.
\end{equation*}

\end{prop}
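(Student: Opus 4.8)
The plan is to treat $\widehat{\mathrm{Cov}}(\hat{\bm\tau}_\textrm{M})$ in \eqref{eq:cov-m-est} as essentially the empirical covariance matrix of the $r$ random vectors $\hat{\bm\tau}_{1\cdot},\ldots,\hat{\bm\tau}_{r\cdot}$ and to run the standard bias--variance bookkeeping, being careful that these vectors, although mutually independent (as noted in the proof of Proposition~\ref{prop:cov-m}), are \emph{not} identically distributed: $\mathrm{E}(\hat{\bm\tau}_{j\cdot})=\bm\tau_{j\cdot}$ varies with $j$. First I would set $\bm U_j=\hat{\bm\tau}_{j\cdot}-\bm\tau_{j\cdot}$, so that the $\bm U_j$ are independent with mean $\bm 0$, write $\bar{\bm U}=r^{-1}\sum_{j=1}^r\bm U_j=\hat{\bm\tau}_\textrm{M}-\bm\tau$, and record the decomposition $\hat{\bm\tau}_{j\cdot}-\hat{\bm\tau}_\textrm{M}=(\bm U_j-\bar{\bm U})+(\bm\tau_{j\cdot}-\bm\tau)$, together with $\sum_{j=1}^r(\bm\tau_{j\cdot}-\bm\tau)=\bm 0$ (which holds since $r^{-1}\sum_{j=1}^r\bm\tau_{j\cdot}=\bm\tau$).

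Next I would expand $(\hat{\bm\tau}_{j\cdot}-\hat{\bm\tau}_\textrm{M})(\hat{\bm\tau}_{j\cdot}-\hat{\bm\tau}_\textrm{M})^\prime$ into a noise term $(\bm U_j-\bar{\bm U})(\bm U_j-\bar{\bm U})^\prime$, a deterministic term $(\bm\tau_{j\cdot}-\bm\tau)(\bm\tau_{j\cdot}-\bm\tau)^\prime$, and two cross terms; after summing over $j$ and taking expectations the cross terms vanish because $\mathrm{E}(\bm U_j-\bar{\bm U})=\bm 0$. For the noise term I would use the algebraic identity $\sum_{j=1}^r(\bm U_j-\bar{\bm U})(\bm U_j-\bar{\bm U})^\prime=\sum_{j=1}^r\bm U_j\bm U_j^\prime-r\,\bar{\bm U}\bar{\bm U}^\prime$ together with independence, which gives $\mathrm{E}\{\bar{\bm U}\bar{\bm U}^\prime\}=r^{-2}\sum_{j=1}^r\mathrm{Cov}(\hat{\bm\tau}_{j\cdot})$ and $\mathrm{E}\{\bm U_j\bm U_j^\prime\}=\mathrm{Cov}(\hat{\bm\tau}_{j\cdot})$, so that $\mathrm{E}\{\sum_{j=1}^r(\bm U_j-\bar{\bm U})(\bm U_j-\bar{\bm U})^\prime\}=\frac{r-1}{r}\sum_{j=1}^r\mathrm{Cov}(\hat{\bm\tau}_{j\cdot})$.

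Putting the pieces together and dividing by $r(r-1)$ yields
\[
\mathrm{E}\{\widehat{\mathrm{Cov}}(\hat{\bm\tau}_\textrm{M})\}=\frac{1}{r^2}\sum_{j=1}^r\mathrm{Cov}(\hat{\bm\tau}_{j\cdot})+\frac{1}{r(r-1)}\sum_{j=1}^r(\bm\tau_{j\cdot}-\bm\tau)(\bm\tau_{j\cdot}-\bm\tau)^\prime ,
\]
and the first term on the right is exactly $\mathrm{Cov}(\hat{\bm\tau}_\textrm{M})$ by the identity $\mathrm{Cov}(\hat{\bm\tau}_\textrm{M})=r^{-2}\sum_{j=1}^r\mathrm{Cov}(\hat{\bm\tau}_{j\cdot})$ already used at the start of \eqref{eq:cov-m-proof}; subtracting it gives the claim. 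I do not expect a real obstacle here: the computation is the usual one for a sample covariance matrix, and the only point demanding care is to keep the pair means $\bm\tau_{j\cdot}$ distinct rather than collapsing them to $\bm\tau$ — it is precisely this heterogeneity that leaves the residual term $\{r(r-1)\}^{-1}\sum_{j=1}^r(\bm\tau_{j\cdot}-\bm\tau)(\bm\tau_{j\cdot}-\bm\tau)^\prime$, the factorial counterpart of the classical between-pair component of variance that cannot be estimated in matched-pair designs.
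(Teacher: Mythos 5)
Your proof is correct and takes essentially the same route as the paper: both rest on the independence of the $\hat{\bm\tau}_{j\cdot}$'s, the identity $\mathrm{Cov}(\hat{\bm\tau}_\textrm{M})=r^{-2}\sum_{j=1}^r\mathrm{Cov}(\hat{\bm\tau}_{j\cdot})$, and $r^{-1}\sum_{j=1}^r\bm\tau_{j\cdot}=\bm\tau$. The only difference is cosmetic: you center each $\hat{\bm\tau}_{j\cdot}$ at its pair mean before expanding, whereas the paper expands the raw second moments via $\mathrm{E}(\hat{\bm\tau}_{j\cdot}\hat{\bm\tau}_{j\cdot}^\prime)=\mathrm{Cov}(\hat{\bm\tau}_{j\cdot})+\bm\tau_{j\cdot}\bm\tau_{j\cdot}^\prime$ directly.
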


\begin{proof}
The proof is a basic maneuver of the expectation and covariance operators. First, by \eqref{eq:est-m} and the joint independence of
$
\hat{\bm \tau}_{j\cdot}
$'s,
$$
\mathrm{Cov} ( \hat{\bm \tau}_\textrm{M} ) 
=
r^{-2} \sum_{j=1}^r \mathrm{Cov} ( \hat{\bm \tau}_{j\cdot} ).
$$
Therefore by \eqref{eq:cov-m-est},
\begin{eqnarray*}
r(r-1)
\mathrm{E} 
\left\{ 
\widehat{\mathrm{Cov}} ( \hat{\bm \tau}_\textrm{M} ) 
\right\} 
& = &
\sum_{j=1}^r
\mathrm{E} 
( 
\hat{\bm \tau}_{j\cdot} \hat{\bm \tau}_{j\cdot}^\prime
)
- 
r\mathrm{E} 
( 
\hat{\bm \tau}_\textrm{M} \hat{\bm \tau}_\textrm{M}^\prime
)
\\
& = & 
\sum_{j = 1}^r \mathrm{Cov}(\hat{\bm \tau}_{j\cdot})
+
\sum_{j=1}^r \bm \tau_{j\cdot} \bm \tau_{j\cdot}^\prime
- r\mathrm{Cov} (\hat{\bm \tau}_\textrm{M}) 
-
r \bm \tau \bm \tau^\prime
\\
& = & 
r(r-1) \mathrm{Cov} (\hat{\bm \tau}_\textrm{M}) 
+
\sum_{j=1}^r 
(\bm \tau_{j\cdot} - \bm \tau) (\bm \tau_{j\cdot} - \bm \tau)^\prime.
\end{eqnarray*}
\end{proof}

Proposition \ref{prop:cov-m-est-bias} implies that the estimator of
$
\mathrm{Cov} ( \hat{\bm \tau}_\textrm{M} ) 
$
is also ``conservative.'' We leave it to the readers to prove that for treatment-control studies, Proposition \ref{prop:cov-m-est-bias} reduces to the corresponding results in \cite{Imai:2008} (p. 4862, Prop. 2, Part 1).

\section{Discussions and concluding remarks}\label{sec:conclusion}

For treatment-control studies, \cite{Imai:2008} compared the variance formulas for the complete-randomization and matched-pair estimators, and derived the condition under which pair-matching leads to precision gains. For general factorial designs, analogous comparisons can be made between \eqref{eq:cov-c} and \eqref{eq:cov-m}. However, to our best knowledge, intuitive closed-form expressions might not be available without additional assumptions on the potential outcomes.

There are multiple future directions based on our current work. First, we may compare the precisions of the complete-randomization and matched-pair estimators under certain mild restrictions on the potential outcomes. Second, it is possible to unify the randomization-based and regression-based inference frameworks, as pointed out by \cite{Samii:2012} and \cite{Lu:2016a}. Third, additional pre-treatment covariates may shed light on the pair-matching mechanism, and help sharpen our current analysis.

\section*{Acknowledgements}

The first author thanks Professor Tirthankar Dasgupta at Rutgers University and Professor Peng Ding at University California at Berkeley, for their early educations on causal inference and experimental design. We thank the Co-Editor-in-Chief and an anonymous reviewer for their thoughtful comments, which have substantially improved the presentation of this paper.

\bibliographystyle{apalike}
\bibliography{factorial_paired_match}

\end{document}